\documentclass[12pt,draftcls,onecolumn]{IEEEtran}
\IEEEoverridecommandlockouts
\overrideIEEEmargins
\usepackage{graphicx}

\usepackage{epstopdf}
\usepackage{xcolor}
\usepackage{amsmath}
\usepackage{amssymb}
\usepackage{amsfonts}
\usepackage{subfigure}
\usepackage{url}
\usepackage{filecontents}
\usepackage{float}
\title{\LARGE \bf Decentralized Charging of Plug-In Electric Vehicles with Distribution Feeder Overload Control}
\author{Abouzar Ghavami, \textit{Member, IEEE}, Koushik Kar, \textit{Member, IEEE}, Aparna Gupta
\thanks{The authors are affiliated with Rensselaer Polytechnic Institute, Troy, NY, USA. Emails: \tt\small ghavamip@gmail.com, \{kark, guptaa\}@rpi.edu.}}
\begin{document}
\maketitle
\let\thefootnote\relax\footnotetext{This work was supported in part by the National Science Foundation through award ECCS-1408333.}
\newtheorem{Def}{Definition}
\newtheorem{Thm}{Theorem}
\newtheorem{Alg}{Algorithm}
\newtheorem{Lem}{Lemma}
\thispagestyle{empty}
\pagestyle{plain}
\begin{abstract}
As the number of charging Plug-in Electric Vehicles (PEVs) increase, due to the limited power capacity of the distribution feeders and the sensitivity of the mid-way distribution transformers to the excessive load, it is crucial to control the amount of power through each specific distribution feeder to avoid system overloads that may lead to breakdowns. In this paper we develop, analyze and evaluate charging algorithms for PEVs with feeder overload constraints in the distribution grid. The algorithms we propose jointly minimize the variance of the aggregate load and prevent overloading of the distribution feeders.
\end{abstract}
\section{Introduction}
\indent In order to reduce dependency on oil, the cost of the fuel consumption, air-pollution in residential areas, and at the same time to increase vehicle engine performance efficiency and use of renewable energy resources for transportation purposes, Plugged-In Electric Vehicles (PEVs) have been getting popular in recent years. The increase in PEV usage would imply a significant increase in the overall load on the electric grid, and therefore efficient management of this excess demand is crucial for the overall efficiency and stability of the grid \cite{HT09}.\\
\indent Based on the US nationwide survey data \cite{US09}, \cite{PR04}, the average US household load in 2009 was $1.3$ KW, while the Level 1 PEV charging uses $1.96$ KW and Level 2 charging uses $7.2$KW of extra power load. The level 1 charging load is about $1.5$ and level 2 charging load is about $5.5$ times the average base household load. If every household owns just one PEV in the near future, the peak demand of the grid load from charging the PEVs can increase the peak load by a factor of $2.5$ to $6.5$ times the current peak load. This peak load would not only increase the peak load that a distribution network (regional or local) draws from the transmission grid, but also causes stress on its feeders and transformers. Therefore, it is necessary to manage the congestion that PEVs charging causes in the feeders to prevent breakdown of transformers and other grid components.\\
\indent However, most of the recent work on coordinated PEV charging have only focused on minimizing the overall variance of the load -- the base (non-PEV) plus PEV power -- in the distribution grid \cite{MCH10}, \cite{GTL13}. This minimizes the (peak-to-average) load that the distribution grid under consideration draws from the transmission grid, but does not necessarily ensure that the transformers and feeders in the distribution network are not overloaded. In fact, as our simulation results in Section~\ref{sec:sim} show, optimal coordinated charging that simply minimizes load variance over time without taking into account feeder overload limits, can seriously overload different components of the distribution grid. It is worth noting that both transformers and distribution feeders have ratings on the load (power, current) they can carry. Transformers are sensitive to temperature and may stop functioning properly beyond a certain temperature threshold: this imposes limits on the power carried by the transformer. Power (current) transmission through short-distance feeders that make up typical distribution networks are constrained by their thermal limits, to avoid power line sag due to high temperature \cite{GE79}.\\
\indent In this paper, therefore, we focus on the goal of minimizing the total load variance in the distribution grid, subject to overload constraints on the distribution feeders that constitute the distribution network. We assume that the overload constraints are specified in the form of a maximum power that a feeder in the distribution grid can carry at any time, which is calculated a priori so as to satisfy the thermal limits of the distribution grid elements. A novel feature of our approaches is the consideration of the distribution network topology in computing the control feedback to the PEVs, which can differ across PEVs, depending on the state of overload in the distribution feeders supplying power to them.  In line with the recent work on this topic \cite{MCH10}, \cite{GTL13},  the algorithms proposed in this paper can be implemented in a decentralized manner. Specifically, we show that the algorithms can be implemented as an iterative price-driven coordination between the utility (or aggregator) and the PEVs (or their smart meters), where the electric utility quotes certain time-dependent, non-linear but single-parameter pricing functions to each PEV, to which the PEVs respond by choosing a charging schedule so as to minimize their individual charging costs (best response).\\
\indent In a recent work, the authors in \cite{HAH12} consider thermal constraints (modeled as a complex function of the transformer load) in considering the PEV charging question. However, the study restricts itself to a single transformer that is at the root of the distribution tree; naturally the network aspects and spatial differences in the control feedback are absent from this approach. Another recent work \cite{ARK13} proposes a distributed charging control algorithm that finds a proportionally fair rate allocation for each PEV with considering the maximum capacity of the transmission feeders; however, whereas \cite{ARK13} does not consider the total amount of energy each PEV gets charged and the optimization is done for only one time slot. 

We consider a convex optimization formulation for the PEV charging problem subject to feeder overload constraints, and present two decentralized approaches for the problem. The first approach (that is based on overload cost-functions and presented in Section \ref{sec:overload-cost}) and its convergence analysis can be viewed as generalization of results in \cite{GTL13}, extended to account for the distribution network topology and feeder overload constraints. We also provide a simpler convergence proof to a more general result by mapping the approach to a gradient projection method. As we consider the feeder capacity constraints, the maximum step size required for convergence of this cost-based method is dependent not only on the number of the PEVs (as in \cite{GTL13}), but also on the maximum depth of the tree topology representing the distribution grid. Furthermore, we also provide efficient algorithms (in Section \ref{sec:comp-algos}) that can be used to compute the charging schedules of individual PEVs once the feedback from the utility (distribution network) is obtained. These two algorithms can be used to solve the individual PEV charging profile selection (optimization) sub-problem of \cite{GTL13} as well. The second decentralized approach that we present (described in Section \ref{sec:primal-dual}) is based on a novel application of a primal-dual method that results in dual based solutions that are amenable to decentralized implementation, despite the non-separability in the problem. Due to its reliance on a primal-dual method instead of a gradient method, this algorithm is not directly comparable with that in \cite{GTL13} in terms of its core technical approach; it also provides different convergence properties from our cost-based approach (and that in \cite{GTL13}). We also interpret (in Section \ref{BestResponse}) the charging profile computation for both proposed overload control approaches as price-driven decentralized best response updates, which has not been done in prior work. Finally, while \cite{GTL13} also presents asynchronous and real-time versions of the algorithm proposed there, we only present and analyze synchronous versions of our algorithms in this work. The convergence results of the cost-based method should naturally extend to asynchronous implementation; the convergence properties of the primal-dual algorithm under asynchronous updates remain open for future investigation.

\section{System Model}\label{sec:model}
\indent We model a system where an electric utility negotiates with $K$ plug-in electric vehicles (indexed $1, ..., K$)  to coordinate their charging schedules, over a distribution network. We discretize the time of day into $T$ units, which are indexed as $1, 2, ..., T$. Let $D(t)$ denote the base demand (aggregated non-PEV demand, assumed to be estimated a priori) over the entire distribution network, and $p_k(t)$ denote the charging power of PEV $k$ at time $t$, for all $t \in \{1, ..., T\}$. Let the vector $\mathbf{p}_k=(p_k(1), ..., p_k(T))$ denote the \emph{charging profile} of PEV $k$ and the vector $\mathbf{p}:= (\mathbf{p}_1, ..., \mathbf{p}_K)$ denote the charging profile of all the $K$ PEVs in the system. Assume that PEV $k$ charges over the time interval $T_k=\{t^s_k, ..., t^f_k\}$, where $t^s_k$ is the charging start time and $t^f_k$ is the charging finish time for charging of PEV $k$, $1 \leq t^s_k < t^f_k \leq T$.

We model the distribution network as a tree rooted at the distribution substation (Figure \ref{Topology}). Each PEV is attached to one of the leaf nodes of the tree (Figure \ref{Topology}). Each PEV $k$ is associated with a set of distribution feeders (links in the tree graph), $\Pi_k$, which transfer power from the distribution substation to PEV $k$, and each distribution feeder $l$ carries power to a set $\Gamma_l$ of PEVs, where $\Gamma_l=\{k: l \in \Pi_k\}$. Let $L$ be the total number of distribution feeders (links) (indexed $1, ..., L$) in the tree topology. Let $K_l=|\Gamma_l|$ denote the number of PEVs that receive electricity through the distribution feeder $l$; obviously $K_l \leq K$.

Let $d_{\max}$ denote the maximum path length (number of distribution feeders) between the distribution substation and any PEV, or in other words, the maximum depth of the tree topology. Let $l_k=|\Pi_k|$ be the number of distribution feeders that transfer power from the distribution substation to PEV $k$; thus $l_k \leq d_{\max}$.\\
\begin{figure}
\centering
\includegraphics[width = 0.5\textwidth]{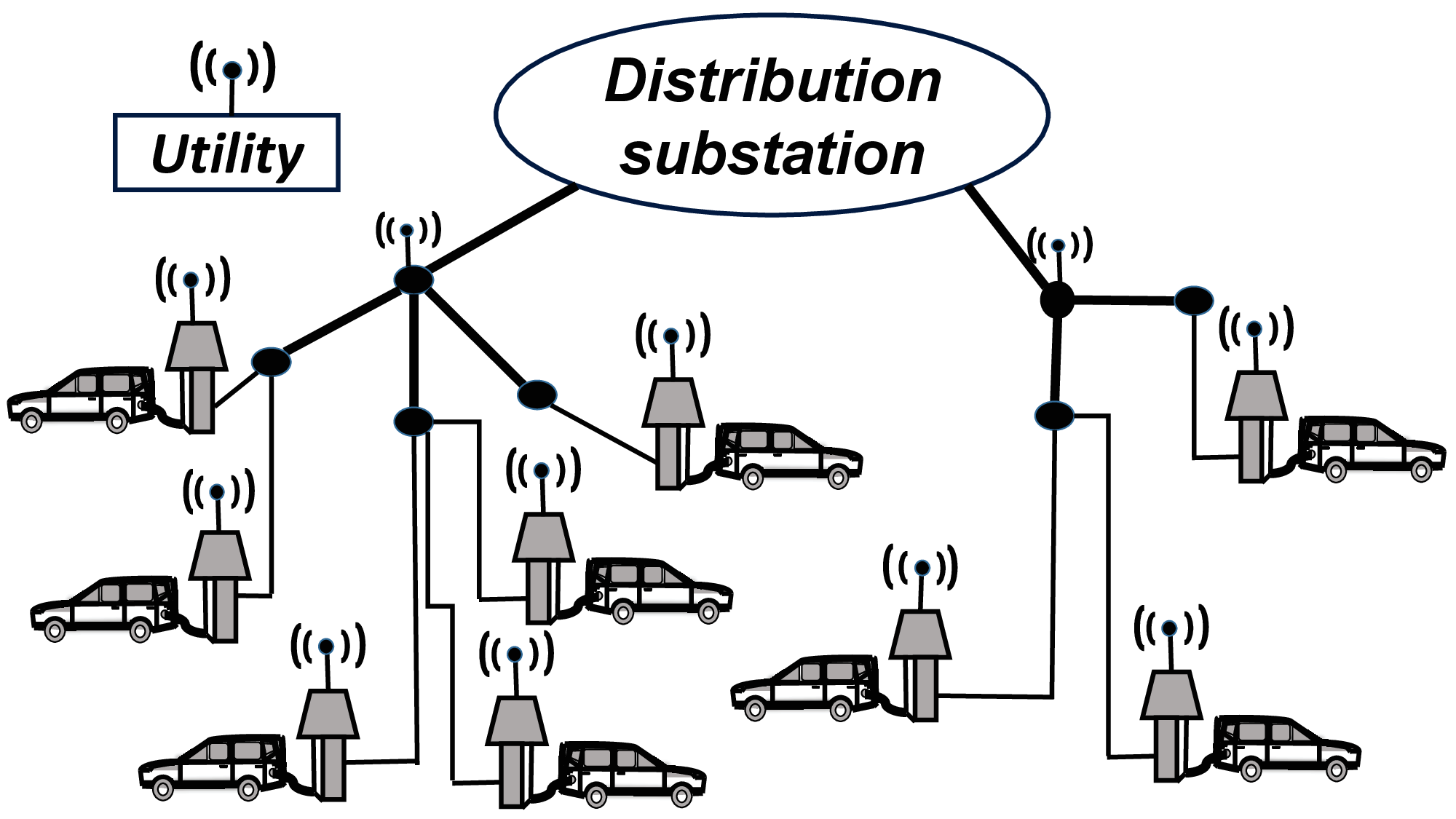}
\caption{Schematic of a distribution grid being used for PEV charging.}
\label{Topology}
\end{figure}
\indent We define $P(t)=\sum_{k=1}^{K}p_k(t)$. Also, $P_l(t)=\sum_{k \in \Gamma_l}p_k(t)$, denoting the total load of PEVs fed from the $l^{th}$ feeder. Let $d_l(t)$ denote the base demand load (non-PEV demand) transmitted through feeder $l$ at time $t$. As each feeder $l$ has a maximum transmission capacity, $\rho_l$, the maximum PEV demand it can support at time $t$ is $P_l^{\max}(t)=\rho_l-d_l(t)$. Thus for each feeder, $l$, we have:
\begin{equation}
P_l(t) \leq \eta_{l,t} P_l^{max}(t),
\label{congestion1}
\end{equation}
where $\eta_{l,t} \leq 1$ is the \textit{overload control} parameter for link $l$ at time $t$. While power flow dynamics can be complex, the simple overload constraints (represented by a power capacity constraint of $\rho_l$ for feeder $l$) are motivated by the operation of feeder/transformer protection systems in real distribution lines, where protection systems monitor the current (or power) flow through the feeder lines/transformers, and simply trip when it exceeds a certain pre-determined threshold \cite{GE79}.\\
\indent For PEV $k$, as per its battery specification, the charging rate should be within a range:
\begin{equation}
0 \leq p_k(t) \leq p_k^{\max}(t), \hspace{1mm} t \in T_k.
\label{limits}
\end{equation}
Obviously, $p_k(t) = 0$ for $t \notin T_k$. Let define $U_k$ as the total charging energy PEV $k$ requests in the interval $T_k$. Hence:
\begin{equation}
\sum_{t\in T_k} p_k(t) = U_k, \hspace{5mm} 1 \leq k \leq K,
\label{Uk}
\end{equation}
where $U_k$ is calculated based on the PEV battery capacity, $B^k$, and charging efficiency, $\nu_k$, as $U_k=\frac{B^k}{\nu_k}$. Let $\mathcal{D}_k=\{\mathbf{p}_k \ | \ 0 \leq p_k(t) \leq p_k^{max}(t), \sum_{t\in T_k}p_k(t)=U_k\}$ denote the set of charging profiles for PEV $k$ satisfying constraints (\ref{limits}) and (\ref{Uk}). Let $\mathcal{D} := \mathcal{D}_1 \times ... \times \mathcal{D}_K$ denote the set of all feasible charging profiles of all PEVs.\\
\indent Subject to constraints (\ref{congestion1}), (\ref{limits}) and (\ref{Uk}), the utility would like to charge the PEVs such that the variance of the total load (PEV plus non-PEV load) in the distribution network is minimized, i.e., it seeks to minimize the following objective:
\begin{equation}
\sum_{t=1}^{T}\left(D(t)+\sum_{k=1}^{K}p_k(t)\right)^2. \label{variance-objective}
\end{equation}
Objective (\ref{variance-objective}) is consistent with that used in prior literature \cite{MCH10}, \cite{GTL13}. Note that this is equivalent to minimizing peak-to-average power ratio in the distribution network, and thereby reduces the load variance in the transmission network (that supplies power to this distribution network through the distribution substation) as well.
Using the notation we have introduced, this can be compactly represented as the \emph{primal problem}, $\mathbf{P}$:
\begin{gather}
\text{$\mathbf{P}$:    } \min_{\mathbf{p} \in \mathcal{D}} f(\mathbf{p}) = \sum_{t=1}^{T}\mathcal{V}\left(D(t)+P(t)\right), \nonumber \\
\text{s.t.: }\hspace{5mm} g_{l,t}(\mathbf{p})=P_l(t) - \eta_{l,t} P_l^{max}(t) \leq 0, \hspace{5mm} \forall l, t,
\label{minim}
\end{gather}
where $\mathcal{V}(x)$ is a strictly convex function whose first and second derivatives are continuous and the second derivative is bounded over $x \in [0, \max_{t}\{\sum_{k=1}^{K} p_k^{\max}(t)\}]$, i.e.  $\exists B_1: \mathcal{V}^{\rq{}\rq{}}(x) \leq B_1$.
In the above formulation, note that the charging constraints of individual PEVs, (\ref{limits}) and (\ref{Uk}), are captured in the constraint set $\mathcal{D}$. Our goal is to obtain decentralized PEV charging solutions that attain the optimal value $f^*$ of this primal problem.
\section{PEV Charging Using Overload Cost Functions} \label{sec:overload-cost}
\indent In the first method, we take into account the overload constraints by associating cost (penalty) functions with the constraints in equation (\ref{congestion1}), and add them to objective (\ref{variance-objective}) to create an \emph{overload-augmented} objective function. More specifically, let each link $l$ be associated with an overload cost function, $C_{l,t}(x)$, at time $t$, where $C_{l,t}(x)$ is a convex, nonnegative function whose first and second derivatives are continuous, and the second derivative is bounded over $x \in \mathbb{R}$, i.e.  $\exists B: C_{l,t}^{\rq{}\rq{}}(x) \leq B$. The primal problem can then be \emph{approximated} by minimizing an overload-discounted objective function $\mathcal{L}(\mathbf{p})$ over the set of all charging profiles $\mathcal{D}$, $\mathbf{P}_1= \min_{\mathbf{p}\in \mathcal{D}}\mathcal{L}(\mathbf{p})$, where
\begin{gather}
\mathcal{L}(\mathbf{p}) = \sum_{t=1}^{T}\left\{\mathcal{V}\left(D(t)+\sum_{k=1}^{K}p_k(t)\right) \right. \nonumber\\
\left. + \sum_{l=1}^{L} C_{l,t}\left(P_l(t)-\eta_{l,t} P_l^{max}(t)\right) \right\}. \label{cost-method-def2}
\end{gather}
Towards developing a distributed algorithm, we solve $\mathbf{P}_1$ using a \textit{gradient projection method} with constant step-size $\alpha$. This method can be viewed as a generalization of the method (Algorithm 1) of \cite{GTL13} such that the feeder overload constraints are also taken into account. Due to the separable nature of the set $\mathcal{D} := \mathcal{D}_1 \times ... \times \mathcal{D}_K$, this can be expressed as the following iterative update procedure,
\begin{equation}
\mathbf{p}_k^{m+1} = \mathcal{P}_{\mathcal{D}_k}[\mathbf{p}_k^{m}-\alpha \mathcal{L}_{\mathbf{p}_k}(\mathbf{p}^m)], \ \ m=0,1, \ldots,
\label{renew1}
\end{equation}
where $\mathcal{P}_{\mathcal{D}_k}[\mathbf{x}]$ is the projection of the vector $\mathbf{x}$ over the hyperplane $\mathcal{D}_k$ and $\mathcal{L}_{\mathbf{p}_k}(\mathbf{p})=[\nabla \mathcal{L}(\mathbf{p})]_{\mathbf{p}_k}$ is the gradient of $\mathcal{L}(\mathbf{p})$ with respect to $\mathbf{p}_k$.
Also, let $p_k^{0}(.):=0$, $\forall k$, and $\alpha < \frac{1}{K(B_1+d_{max}B)}$.
\begin{Thm}
If $\mathcal{D}$ is not empty, then for $0 < \alpha < (K(B_1+d_{max}B))^{-1}$, $\mathcal{L}$ decreases monotonically as $m$ increases and $\mathcal{L}^{*}=\lim_{m \rightarrow \infty}\mathcal{L}(\mathbf{p}^m)$ minimizes $\mathcal{L}(\mathbf{p})$ over $\mathcal{D}$.\\
\label{Lyapunov}
\end{Thm}
\begin{proof}
See Appendix A-1.
\end{proof}
If we set $B = 0$ (i.e., overload cost functions are absent) then the step-size condition in Theorem \ref{Lyapunov} is the same as that required for convergence of Algorithm 1 in \cite{GTL13}. The above results also hold if the PEV charging profile updates are not all synchronous, but follow the \textit{partially asynchronous model} as described in \cite{BT89}, in which only requires that each PEV iteratively updates its charging profile within a finite amount of time. The analysis technique and convergence results for this case are similar to that of Algorithm 2 presented in \cite{GTL13}. More discussion on the partial asynchronous implementation of our algorithm can be found in Appendix B.
\section{PEV Charging Using Primal-Dual Subgradient Method}\label{sec:primal-dual}
Due to the non-separability of the problem with respect to the decision variables (charging schedules), the dual gradient (subgradient) method (see \cite{B99} (Chapter 6)), which has been widely used to develop distributed solutions in other related contexts \cite{KMT98, LL99}, does not easily lead to a decentralized algorithm in our problem. To circumvent the issue, we apply a primal-dual subgradient method studied in \cite{NO09,Z92}.\\
The \emph{Lagrangian} of (\ref{minim}), $\mathcal{L}(\mathbf{p}, \mathbf{\mu})\text{: } \mathcal{D} \times \mathbb{R}_{+}^{L\times T} \rightarrow \mathbb{R}$, is defined as
\begin{equation}
\mathcal{L}(\mathbf{p}, \mathbf{\mu})=f\left(\mathbf{p})+\mu^{T}g(\mathbf{p}\right),
\label{Dual}
\end{equation}
where $g(\mathbf{p})=(g_{1,1}(\mathbf{p}), ..., g_{L, T}(\mathbf{p})) \leq \mathbf{0}$ is the feeder overload constraint vector, and $\mu = (\mu_{1,1}, ..., \mu_{L,T})$ is the vector of dual variables. We assume there exists a feasible solution  $\overline{\mathbf{p}}$ that satisfies all distribution feeders\rq{} constraints with a ``slack'' of $\epsilon$: $g_{l, t}(\overline{\mathbf{p}})<-\epsilon$ for $\forall l, t$, for some $\epsilon > 0$. Let $P^{max}(t)=\sum_{k=1}^{K}p_k^{max}(t)$ and define $\mu^{max}$ as in the following:
\begin{equation}
\mu^{max}=\frac{\sum_{t=1}^{T}\left\{\mathcal{V}\left(D(t)+P^{max}(t)\right)-\mathcal{V}\left(D(t)\right)\right\}}{\epsilon LT}+\frac{1}{LT}.
\label{mumax}
\end{equation}
For each feeder $l$, $1 \leq l \leq L$, define the set $\mathcal{M}_{l,t} = \{\mu_{l,t} \mid 0 \leq \mu_{l,t} \leq \mu^{max}\}$, for $t=1, \ldots, T$. The iterates of the \emph{primal-dual subgradient method} at the $m^{th}$ step, $m \geq 0$, are generated as follows \cite{NO09,Z92}:
\begin{equation}
\mathbf{p}_k^{m+1}=\mathcal{P}_{\mathcal{D}_k}\left[\mathbf{p}_k^m - \alpha \mathcal{L}_{\mathbf{p}_k}(\mathbf{p}^m,\mathbf{\mu}^m)\right], \hspace{3mm} 1 \leq k \leq K,
\label{ProjectDk}
\end{equation}
\begin{equation}
\mathbf{\mu}_{l,t}^{m+1}=\mathcal{P}_{\mathcal{M}_{l,t}}\left[\mathbf{\mu}_{l,t}^{m}+\alpha \mathcal{L}_{\mu_{l,t}}(\mathbf{p}^m, \mathbf{\mu}^m)\right], \hspace{3mm} \forall l, t,
\label{Projectpm}
\end{equation}
where $\mathcal{P}_{\mathcal{D}_k}$ and $\mathcal{P}_{\mathcal{M}_{l,t}}$ denote the projections on sets $\mathcal{D}_k$ and $\mathcal{M}_{l,t}$ respectively. The vectors $\mathcal{L}_{\mathbf{p}_k}=[\nabla \mathcal{L}(\mathbf{p}, \mathbf{\mu})]_{\mathbf{p}_k}$ and $\mathcal{L}_{\mu_{l,t}}=[\nabla \mathcal{L}(\mathbf{p}, \mathbf{\mu})]_{\mathbf{\mu}_{l,t}}$ denote the partial subgradient of $\mathcal{L}(\mathbf{p}, \mathbf{\mu})$ with respect to $\mathbf{p}_k$ and $\mu_{l,t}$, respectively. The initial vectors in $\mathbf{p}^0$  is assigned with arbitrary values as the algorithm projects the solution on the constraint plane $\mathcal{D}$ at the first step. The initial vector $\mathbf{\mu}^0$ should be feasible, e.g. $\mathbf{\mu}^{0}=0$. The scalar $\alpha > 0$ is a constant step size.

The convergence analysis that we present next follows using the convergence result on the primal-dual subgradient method in \cite{NO09}, and applying it to our context. Let $[x]^+ = \max\{x, 0\}$ and $\hat{\mathbf{p}}^m=\frac{1}{m}\sum_{i=0}^{m-1}\mathbf{p}^{i}$. As $\mathcal{D}$ is convex, thus $\hat{\mathbf{p}}^{m}\in\mathcal{D}$. Let
\begin{equation}
L_1=K\left (\sum_{t=1}^{T}\mathcal{V}\rq{}\left(D(t)+P^{max}(t)\right)+LT\mu^{max}\right ), \nonumber
\end{equation}
\begin{equation}
L_2=\sum_{l=1}^{L}\sum_{t=1}^{T}\max \left \{ \sum_{k\in \Gamma_l}p_k^{max}(t)-\eta_{l,t} P_l^{max}(t), \eta_{l,t} P_l^{max}(t)\right \}, \nonumber
\end{equation}
and $N = \max\{L_1, L_2\}$.
\begin{Thm}
As $m \rightarrow \infty$, the amount of constraint violation of the vector $\hat{\mathbf{p}}^m$, $||[g(\hat{\mathbf{p}}^{m})]^{+}||$, is upper bounded by $\frac{\alpha N^2}{2}$, and the primal cost of the vector $\hat{\mathbf{p}}^{m}$, $f(\hat{\mathbf{p}}^{m})$, is upper bounded by $f^*+\alpha N^2$ and lower bounded by $f^*-\alpha LT\mu^{max}N^2$ as $m \rightarrow \infty$.
\label{primaldual}
\end{Thm}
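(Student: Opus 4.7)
The plan is to reduce Theorem~\ref{primaldual} to the standard convergence guarantee for the constant step-size primal-dual subgradient method in~\cite{NO09}, specialized to our Lagrangian $\mathcal{L}(\mathbf{p}, \mu)$ on the compact product set $\mathcal{D} \times \prod_{l,t}\mathcal{M}_{l,t}$. Two preparatory ingredients are required: (i) uniform bounds $L_1, L_2$ on the primal and dual partial subgradients of $\mathcal{L}$ over this product, so that $N=\max\{L_1, L_2\}$ controls both subgradient norms throughout the iteration; and (ii) a bound $\mu^{\max}$ on the coordinates of any optimal dual vector, so that restricting $\mu_{l,t}$ to $[0,\mu^{\max}]$ does not exclude a dual optimum and the box $\mathcal{M}_{l,t}$ can legitimately replace $\mathbb{R}_+$ in the framework of~\cite{NO09}.

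For (i), I would write the partial subgradients from (\ref{minim}) explicitly as $\partial \mathcal{L}/\partial p_k(t) = 2(D(t)+P(t)) + \sum_{l\in\Pi_k}\mu_{l,t}$ and $\partial \mathcal{L}/\partial\mu_{l,t} = g_{l,t}(\mathbf{p}) = P_l(t)-P_l^{\max}(t)$. Using $0\le P(t)\le P^{\max}(t)$ together with the box constraint $\mu_{l,t}\le\mu^{\max}$ and the crude inclusion $|\Pi_k|\le L$, an $\ell_1$-style sum of $|{\partial \mathcal{L}/\partial p_k(t)}|$ over $k$ and $t$ reproduces the expression defining $L_1$. For the dual side, the box constraints $0\le p_k(t)\le p_k^{\max}(t)$ give $-P_l^{\max}(t)\le g_{l,t}(\mathbf{p})\le\sum_{k\in\Gamma_l}p_k^{\max}(t)-P_l^{\max}(t)$; summing the coordinate-wise absolute value bound over $l,t$ is exactly $L_2$.

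The main obstacle will be (ii). I would invoke the Slater point $\overline{\mathbf{p}}$ with $g_{l,t}(\overline{\mathbf{p}})<-\epsilon$ and apply weak duality: for any optimal dual $\mu^*$, $f^* = q(\mu^*)\le \mathcal{L}(\overline{\mathbf{p}},\mu^*) \le f(\overline{\mathbf{p}}) - \epsilon\sum_{l,t}\mu^*_{l,t}$, which yields $\sum_{l,t}\mu^*_{l,t}\le(f(\overline{\mathbf{p}})-f^*)/\epsilon$. A coarse bound $f(\overline{\mathbf{p}}) - f^* \le \sum_t(2D(t)+P^{\max}(t))P^{\max}(t)$, obtained from $f(\overline{\mathbf{p}})\le\sum_t D(t)^2+\sum_t(2D(t)+P^{\max}(t))P^{\max}(t)$ together with the trivial lower bound $f^*\ge\sum_t D(t)^2$ (valid since all $p_k(t)\ge 0$), then reproduces the main term of (\ref{mumax}); the additive $1/(LT)$ provides strict interior slack so that an optimal $\mu^*$ sits in the interior of $\mathcal{M}$. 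Care is needed here to keep the chain of inequalities tight enough to match the stated formula, and to confirm that bounding each coordinate by the sum is what the framework of~\cite{NO09} actually requires.

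With $N$ and $\mu^{\max}$ in hand, the running-average estimates of~\cite{NO09} applied to $\hat{\mathbf{p}}^m=\frac{1}{m}\sum_{i=0}^{m-1}\mathbf{p}^i$ immediately deliver $\|[g(\hat{\mathbf{p}}^m)]^+\|\le\alpha N^2/2+O(1/m)$ and $f(\hat{\mathbf{p}}^m)\le f^*+\alpha N^2+O(1/m)$. The lower bound follows from the standard primal-recovery inequality $f(\hat{\mathbf{p}}^m)\ge f^*-\|\mu^*\|_1\|[g(\hat{\mathbf{p}}^m)]^+\|_\infty$, combined with $\|\mu^*\|_1\le LT\mu^{\max}$ and the already-established constraint-violation bound, which produces the stated $f^*-\alpha LT\mu^{\max}N^2$ after absorbing constants. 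Sending $m\to\infty$ eliminates the $O(1/m)$ averaging terms and yields the three conclusions of the theorem.
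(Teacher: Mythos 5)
Your overall route coincides with the paper's own proof: bound the primal and dual partial subgradients coordinate-wise and pass to the $\ell_1$ norm to obtain $L_1$, $L_2$ and $N$; use the Slater point $\overline{\mathbf{p}}$ together with $f(\overline{\mathbf{p}})\le\sum_{t=1}^{T}(D(t)+P^{max}(t))^2$ and $f^*\ge q_0\ge\sum_{t=1}^{T}D(t)^2$ to connect $\mu^{max}$ in (\ref{mumax}) to the quantity $(f(\overline{\mathbf{p}})-q_0)/\epsilon$; and then read off the three asymptotic bounds from parts (a)--(c) of Proposition 2 in \cite{NO09} applied to the running average $\hat{\mathbf{p}}^m$, the lower bound coming from $\|\mu\|_1\le LT\mu^{max}$ combined with the constraint-violation estimate. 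All of that matches the paper's Appendix B and is fine.

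The step that does not go through as you state it is your ingredient (ii). Weak duality at the Slater point only yields $\sum_{l,t}\mu^*_{l,t}\le (f(\overline{\mathbf{p}})-f^*)/\epsilon\le LT\mu^{max}-1$, i.e.\ an $\ell_1$ bound on a dual optimum; this does not imply $\mu^*_{l,t}\le\mu^{max}$ for each coordinate, so it does not show that the box $\mathcal{M}=\prod_{l,t}[0,\mu^{max}]$ contains a dual optimum, let alone in its interior (a single coordinate of $\mu^*$ could a priori be as large as $LT\mu^{max}-1$, which far exceeds $\mu^{max}$). You flagged this yourself (``confirm that bounding each coordinate by the sum is what the framework actually requires''), and indeed it is not how the paper argues: the paper never asserts $\mu^*\in\mathcal{M}$. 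Instead it uses the same arithmetic, via (\ref{mumax-ineq}), to write $LT\mu^{max}=\frac{f(\overline{\mathbf{p}})-q_0}{\gamma}+r$ with $r>1$, and verifies that every $\mu\in\mathcal{M}$ satisfies $\|\mu\|\le\|\mu\|_1\le\frac{f(\overline{\mathbf{p}})-q_0}{\gamma}+r$; this ``Slater bound plus margin $r$'' is the form of dual set in terms of which Proposition 2 of \cite{NO09} is invoked, with $r>1$ used to absorb a $1/r$ factor in the violation bound of part (a). To complete your proof you must either check the hypothesis of \cite{NO09} in this form, as the paper does, or prove a genuine per-coordinate bound on the dual optimal set by some other means; the ``interior slack'' interpretation of the additive $1/(LT)$ term is not supported by your inequality.
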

\begin{proof}
See Appendix A-2.
\end{proof}
\indent Theorem (\ref{primaldual}) implies that for $\alpha \rightarrow 0$, $||[g(\hat{\mathbf{p}}^{m})]^{+}||$ and $f(\hat{\mathbf{p}}^{m})$, respectively go to zero and $f^{*}$, as $m \rightarrow \infty$. In other words, the average iterates approaches the optimal solution of the primal problem in the limit, as the step size $\alpha$ becomes small. Establishing the convergence of this approach for an appropriately defined asynchronous implementation model remains open for future investigation. As compared to the approach presented in Section \ref{sec:overload-cost}, the primal-dual approach described in this section uses a fundamentally different technique in handling the constraints in (\ref{minim}). In particular, the overload cost-based approach associates a differentiable penalty function with each constraint and requires primal variable updates only; the penalty (overload cost) functions associated with the constraints get updated automatically when the primal variables are updated. In contrast, the primal-dual approach moves both the primal and dual variables in a gradient direction of the Lagrangian function (\ref{Dual}). Note that the corresponding dual function is non-differentiable (the optimal solution of the primal problem $\mathbf{P}$ is not unique). These differences have implications on the convergence properties of the two approaches, as we illustrate in Section \ref{sec:sim}.
\section{PEV Charging Profile Projection Algorithms}\label{sec:comp-algos}
\indent In this section we describe how the projection step in equations (\ref{renew1}) and (\ref{ProjectDk}), used to update the charging profile of each PEV in the two methods, can be computed efficiently. Based on the Projection Theorem defined in Proposition 2.1.3 of \cite{B99}, as $\mathcal{D}_k$ is nonempty, closed and convex subset of $\mathbb{R}^T$, there exists a unique $\mathbf{p}^{m+1}_k \in \mathcal{D}_k$, called the \emph{projection} of $\mathbf{p}_{k}^{m}-\alpha \mathbf{q}_k^m$ on $\mathcal{D}_k$, that minimizes $||\mathbf{p}_k - (\mathbf{p}_{k}^{m} - \alpha \mathbf{q}_k^m)||^2_2$ over all $\mathbf{p}_k \in \mathcal{D}_k$, where $\mathbf{q}_k^{m}=(q_k^{m}(1), ..., q_k^{m}(T))$ denote $\mathcal{L}_{\mathbf{p}_k}(\mathbf{p}^{m})$ in equation (\ref{renew1}), or $\mathcal{L}_{\mathbf{p}_k}(\mathbf{p}^{m}, \mathbf{\mu}^{m})$ in equation (\ref{ProjectDk}). Let us denote $\hat{\mathbf{p}}_k=\mathbf{p}_k^{m+1}$, then we have:
\begin{gather}
\hat{\mathbf{p}}_{k}=\arg\min_{\mathbf{p}_k \in \mathcal{D}_k} ||\mathbf{p}_k-(\mathbf{p}_{k}^{m}-\alpha \mathbf{q}_k^m)||_2^{2} \nonumber \\
= \arg\min_{\mathbf{p}_k \in \mathcal{D}_k}\sum_{t \in T_k}(p_k(t)+(\alpha q^m_k(t)-p_k^m(t))^2.
\label{renew2}
\end{gather}
The optimization problem in (\ref{renew2}) is a constrained least-square minimization problem that is also equivalent to the optimization problem defined in (8) of \cite{GTL13}. Here we propose an efficient algorithm to solve this minimization problem. Let us define $b_k(t):= \alpha q_k(t)-p_k(t)$, then (\ref{renew2}) represents a \lq\lq{}valley filling\rq\rq{} question with respect to $b_k^m(t)$ for just one vehicle, PEV $k$. Let us denote $\mathbf{b}_{k}=(b_k(1), ..., b_K(T))$, and $\mathbf{b}=(\mathbf{b}_1, ..., \mathbf{b}_K)$.
\begin{Thm}
The optimal solution, $\hat{\mathbf{p}}_k \in \mathcal{D}_k$, of equation (\ref{renew2}) is \emph{uniquely} derived as $\hat{p}_k(t) = [\min\{\lambda-b_k^m(t), p_k^{max}(t)\}]^{+}$, $\forall t \in T_k$, where $\lambda$ is a unique constant in the interval $\Lambda$:\\ $\Lambda=[\min_{t\in T_k}\left\{b_k^m(t)\right\}, \max_{t\in T_k} \left\{ b_k^m(t) +p_k^{max}(t)\right\}]$.
\label{valleyfill1}
\end{Thm}
\begin{proof}
See Appendix A-3.
\end{proof}
\indent Based on Theorem \ref{valleyfill1}, it is possible to use bisection method to find $\lambda$ in interval $\Lambda$. Let define $y(\lambda)=\sum_{t \in T_k}p_k(t)-U_k=\sum_{t \in T_k}[\min\{\lambda-b_k^m(t), p_k^{\max}(t)\}]^+ - U_k$. As $y$ is an increasing function of $\lambda$, the algorithm converges after $\log_2(|\Lambda|/{\epsilon'})$ steps, where $|\Lambda|$ is the length of interval $\Lambda$, and $\epsilon'$ is maximum approximation error. The algorithm runs in $O(T\log_2(\frac{1}{\epsilon'}))$ time, and as $\epsilon'$ gets smaller the error goes to zero.\\
\indent Based on Theorem \ref{valleyfill1}, it is also possible to compute the optimal solution of (\ref{renew2}) exactly, but in $O(T^2)$ time. This is described in the \emph{Charging Profile Projection} (CPP) algorithm.\\
\noindent \textit{Algorithm CPP}: \\
\indent Initialization: Given $b_k^m(.)$, $p_k^{max}(.)$ and $U_k$. Set $\hat{p}_k(.) \leftarrow 0$.\\
\indent Step 1: Set $a(.) = b_k^m(.)+\hat{p}_k(.)$.\\
\indent Step 2: Find $a_{min} = \min_{t\in T_k}\{a(t): a(t) < b_k^m(t)+p_k^{max}(t)\}$. If min value does not exist then Finish. Set $T_{min}=\{t: a(t)=a_{min}\}$.\\
\indent Step 3: Find $a_{next}=\min_{t \in T_k}\{a(t): a(t) > a_{min}\}$. If min value does not exist then $a_{next}=\infty$.\\
Step 4:Find $\lambda = \min\left\{\min_{t\in T_{min}}\left\{b_k^m(t)+p_k^{max}(t)\right\}, a_{next}\right\}$.\\
\indent Step 5: Let $\gamma = (\lambda - a_{min}) \times |T_{min}|$ where $|T_{min}|$ is the number of elements in $T_{min}$.\\
\indent Step 6: If $U_k > \gamma > 0$ then set $U_k \leftarrow U_k - \gamma$ and $\hat{p}_k(t)\leftarrow \hat{p}_k(t) + (\lambda - a_{min})$ for all $t \in T_{min}$ and Goto Step 1; Otherwise if $U_k \leq \gamma$ then set $\hat{p}_k(t) \leftarrow \hat{p}_k(t)+\frac{U_k}{|T_{min}|}$ for all $t \in T_{min}$ and Finish.\\
\indent In Algorithm CPP, we \lq\lq{}valley fill\rq\rq{} the different time slots $t$ in the increasing order of $b_k^m(t)$. Thus we start from the minimum level of the curve $b_k^m(t)$ in step 2, and we fill the valley up to the next minimum $b_k^m(t)$ in step 3, ensuring that the constraint for $0 \leq \hat{p}_k(t) \leq p_k^{max}(t)$ is not violated. In step 4, $\lambda$ is set to one of two cases: (i) $b_k^m(t) + p_k^{max}(t)$ for some $t \in T_k$ or (ii) $a_{next}$ which is the minimum level of available charging time slots for which charging is not assigned yet. If case (i) occurs, at least one charging time slot is omitted from the charging algorithm procedure. If case (ii) occurs, one new charging time slot gets assigned for the charging of PEV $k$. Case (i) can happen at most once for  each time slot, and therefore at most $T$ times. Similarly, case (ii) can happen at most once per time slot, and therefore at most $T$ times as well. Therefore the algorithm would finish at most in $2T$ loops. The calculations for each loop is $O(\log T)$ to find $a_{min}$, $O(\log T)$ to find  $a_{next}$, and $O(T)$ to renew $\hat{p}_k(t)$ and $U_k$. Therefore the total calculations for each loop is $O(T)$ and the total calculations for the algorithm is $O(T^2)$.

\section{Decentralized Iterative Best Response Implementation}\label{BestResponse}
\indent The two overload control methods described in this paper can be implemented as iterative best response updates by the PEVs responding to appropriately defined charging price functions determined by the utility (or aggregator). To develop this interpretation, we assume that each PEV selfishly minimizes its own individual charging cost, given a particular cost (price) function provided by the utility.\\
\indent Let the utility calculate $\mathbf{b}^{m}_k$ as follows:
\begin{equation}
\mathbf{b}^{m}_k = \alpha \mathbf{q}^{m}_k - \mathbf{p}^{m}_k,
\label{bkNearCost}
\end{equation}
where $\mathbf{q}_k^m=\mathcal{L}_{\mathbf{p}_k}(\mathbf{p}^{m})$ is derived from (\ref{cost-method-def2}) for cost-based method, and $\mathbf{q}_k^m=\mathcal{L}_{\mathbf{p}_k}(\mathbf{p}^{m}, \mathbf{\mu}^{m})$ is derived from (\ref{Dual}) for primal-dual method. The utility updates the dual variables as $\mathbf{\mu}^{m}$ as in equation (\ref{Projectpm}).\\
\indent At the $m^{th}$ iteration of both overload control methods, the utility proposes to each PEV $k$ time-dependent non-linear pricing functions $\mathbf{\Psi}_k^m(\mathbf{p}_k)=(\Psi_{k}^{m,1}(p_k(1)), ..., \Psi_{k}^{m,T}(p_k(T)))$ as follows:
\begin{gather}
\psi_k^{m,t}(p_k(t))=\mathcal{W}_k\left(b_k^{m}(t)+p_k(t)\right)-\mathcal{W}_k\left(b_k^{m}(t)\right),
\label{psikmNear}
\end{gather}
where $\mathcal{W}_k$ is a strictly convex function with continuous first and second derivatives. Note that the functions $\psi_k^{m,t}(\cdot)$ are single parameter functions ($\psi_k^{m,t}(\cdot)$ just depends on $b_k^{m}(t)$). Therefore, to communicate $\mathbf{\Psi}_k^m(\mathbf{p}_k)$ to PEV $k$, it suffices to just communicate the vector $\mathbf{b}_k^m$ to PEV $k$ at iteration $m$. Each PEV $k$ then updates its charging profile in order to minimize its charging cost:
\begin{gather}
\mathbf{p}_k^{m+1}=\arg\min_{\mathbf{p}_k \in \mathcal{D}_k}\sum_{t=1}^{T}\psi_{k}^{m,t}(p_k(t)) \nonumber \\
=\arg\min_{\mathbf{p}_k \in \mathcal{D}_k}\sum_{t \in T_k}\left(b_k^{m}(t)+p_k(t)\right)^2.
\label{PEVcost}
\end{gather}
Using equation (\ref{bkNearCost}), we observe that the RHS of (\ref{PEVcost}) is the same as the RHS of (\ref{renew2}), and can thus be computed using CPP algorithm described in Section \ref{sec:comp-algos}. The PEVs then communicate the chosen schedules $\mathbf{p}_k^{m+1}$ back to the utility, based on which the $\mathbf{b}^{m+1}_k$ values are computed, and the process repeats until convergence.
\section{Numerical Results}\label{sec:sim}
\indent In this section we use numerical studies to evaluate the performance of the two proposed overload control approaches, in minimizing the total load variation as well as controlling the feeder overload in the distribution network. We set $\mathcal{V}(x)=x^2$ in the primal problem $\mathbf{P}$, as defined in (\ref{minim}). We use the tree-structured IEEE Bus 13 distribution network \cite{B13} with the distribution substation located at the root node of the tree; the maximum depth of tree topology, $d_{max}=6$. The hourly load demand data is obtained from \cite{AE10}, and we assume the peak total load of $5$ MW in the distribution network.

In this simulation study, we connect $50$ PEVs with homogeneous charging constraints to each node (load point) of IEEE Bus 13 residential network. Each PEV requests a charging amount of $U_k = 10$ KWh.  The maximum charging capacity of each PEV is assumed to be $1.96$ KW, and the charging start and finish times are $t_k^s=0$ and $t_k^f=T=24$, for all PEVs. The PEV total load is equal to $13 \times 50 \times 10$ KWh $= 6.5$ MWh that is $5.4$ percent of the maximum generation capacity of the IEEE Bus 13 that is evaluated as $5$ MW $\times$ 24 hours $= 120$ MWh. We assume that the capacity of each feeder (link) $l$ equals $\rho_l = \nu D_{max} r_l$, where $D_{max}=5$ MW is the maximum generation capacity of IEEE Bus 13, $r_l$ is the \textit{spot load ratio} (the base load on the link as a fraction of the total base load in the distribution network) for link $l$, and $\nu$ is the feeder overload safety factor that is set to $1.5$. The spot load ratios for each link are derived from \cite{B13}. The base load at each link $l$ is also calculated as $d_l(t)=r_l D(t)$. Therefore, we have: $P_l^{max}(t)=\rho_l-d_l(t) = r_l(\nu D_{max} - D(t))$.\\
We monitor the \textit{normalized maximum link overload} of the feeders at each time $t$, calculated as:
\begin{gather}
\max_{1 \leq l \leq L}\frac{P_l(t)-P_l^{max}(t)}{\rho_l}.
\label{NormOver}
\end{gather}
Let $s_t(\mathbf{p})=\sum_{k=1}^{K}p_k(t)$ denote the total PEV aggregate load at time $t$, and let the vector $\mathbf{s}(\mathbf{p})=(s_1(\mathbf{p}), ..., s_{T}(\mathbf{p}))$ denote the total PEV aggregate load during the time window of interest, $1, ..., T$. Let $\mathbf{p}^*$ denote the limit point to which our update algorithms converge. The normalized error at iteration $m$ is determined as the normalized 2-norm distance of the total PEV load vector at iteration $m$, to $\mathbf{p}^{*}$:
\begin{equation}
\text{Normalized error}=\frac{||\mathbf{s}(\mathbf{p}^{m})-\mathbf{s}(\mathbf{p}^*)||_2}{||\mathbf{s}(\mathbf{p}^*)||_2}. 
\end{equation}
Figures \ref{TotalLoad} and \ref{Overload} show the total load and normalized maximum overload, respectively, in the distribution network. The overload cost function used is $C_{l,t}(x)=0$ for $x<0$ and $C_{l,t}(x)=\beta_{l,t} x^{(2+\hat{\epsilon})}$ for $x \geq 0$, where $\hat{\epsilon}$ is set to $0.01$, and $\eta_{l,t}$ is set to $0.9$ for all $l,t$. We set $\beta_{l,t}$ to the minimum value, $\beta^*$, for all $l,t$, such that the maximum link overload is zero for all feeders. The simulations are done with the same update step size, $\alpha$, for both overload control methods; $\alpha$ is derived based on the bound stated in Theorem \ref{Lyapunov}. 

From Figure \ref{TotalLoad}, we observe that the variance of the total load in the primal-dual method (Section~\ref{sec:primal-dual}) and the overload cost minimization method (Section~\ref{sec:overload-cost}), are approximately the same. In the figure, note that the \textit{no overload control} case, that is proposed in \cite{GTL13}, corresponds to setting $\beta_{l,t} = 0 \ \forall l, t$, i.e., only the load variance in the distribution grid is minimized, without taking into consideration any feeder capacity constraints. The variance of the total load in the primal-dual and cost-based methods is only greater than that attained without overload control by a small factor (about $0.45$ percent).  From Fig. \ref{Overload}, we observe that both approaches are very effective in avoiding overloading of distribution feeders, as the normalized overload remains below zero or very close to it at all times. The primal-dual approach has a slightly better performance in overload control compared to the cost-based approach; the no overload control solution does significantly overload one or more of the feeders for significant periods of time.\\
\begin{figure}
\centering
\subfigure[]
{
\includegraphics[width=0.75\textwidth]{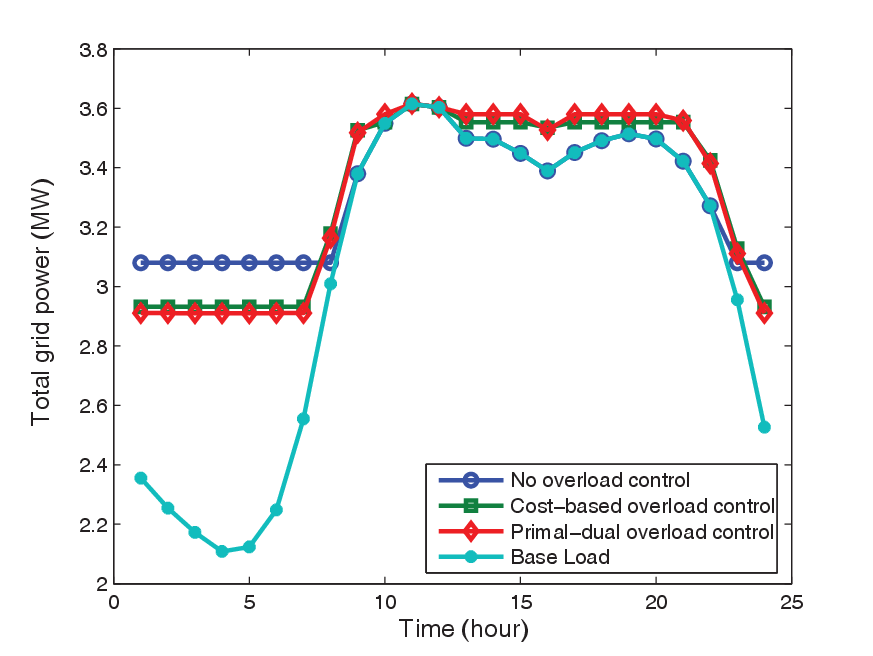}
\label{TotalLoad}
}
\subfigure[]
{
\includegraphics[width=0.75\textwidth]{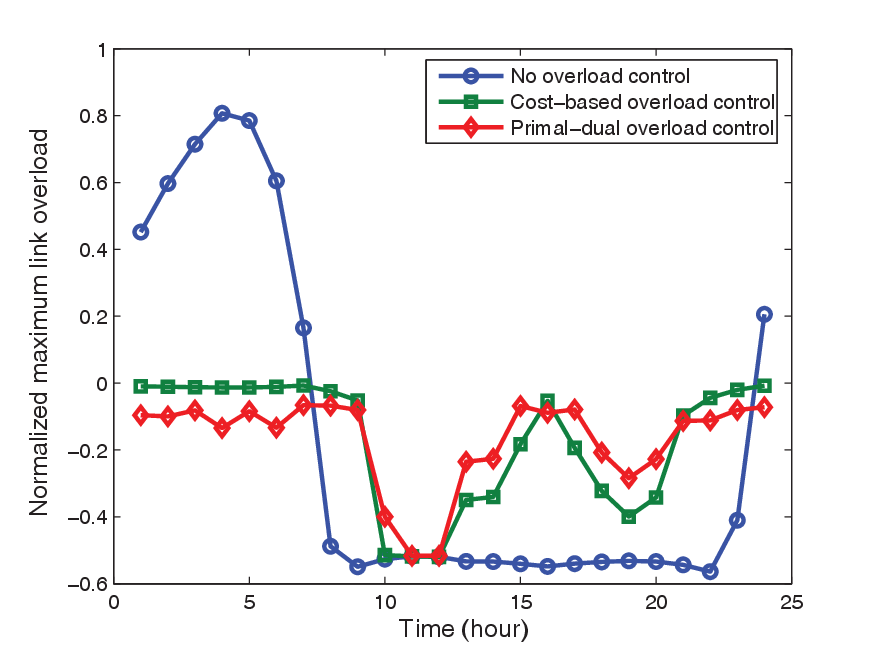}
\label{Overload}
}
\caption{(a) Total load, and (b) Normalized maximum overload of the feeders, in the IEEE Bus 13 distribution system with no overload control, and overload control using cost-minimization and primal-dual methods.}
\vspace{-5mm}
\end{figure}
\indent Figure \ref{NormError} demonstrates the normalized error vs. the number of update rounds for this simulation study. We observe that the normalized error decreases slightly faster in the cost-based approach. Moreover, the normalized error for the primal-dual approach does not monotonically decrease with increasing number of rounds, unlike the cost-based approach.
\begin{figure}
\centering
\includegraphics[width=0.75\textwidth]{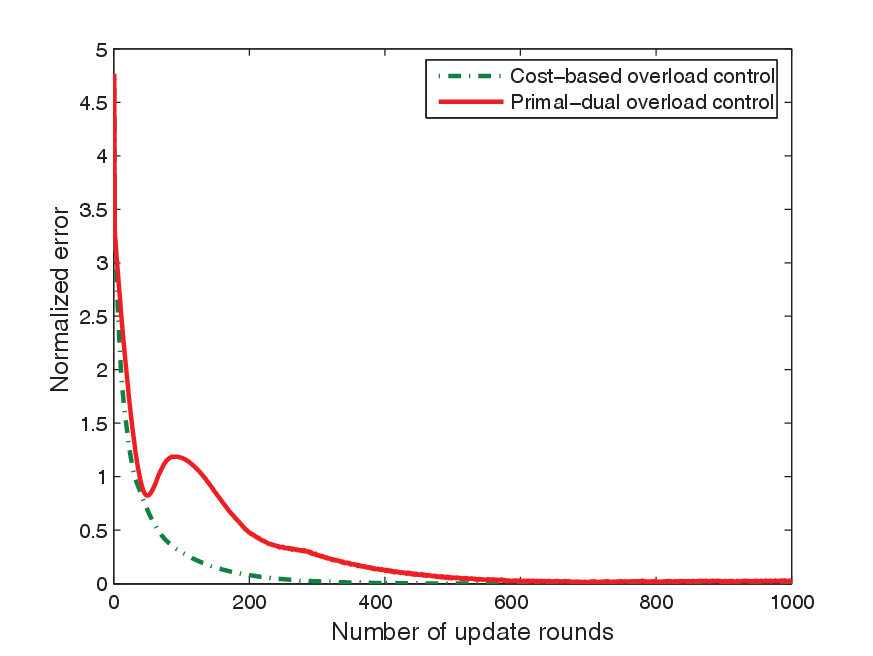}
\caption{Normalized error vs. number of update rounds for the two methods.}
\label{NormError}
\end{figure}
\section{Conclusion} \label{sec:concl}
\indent In this paper we addressed the question of overload control on the feeders of a distribution network that is being used in charging PEVs. We proposed two different (sub-) gradient optimization methods for the overload control problem -- one based on cost (penalty) minimization and another based on primal-dual approach -- both of which are amenable to distributed implementation through back-and-forth communication between the utility (aggregator) and the PEVs. We showed theoretically that the iterative algorithms based on both approaches converge to attain near-optimal load variance while ensuring that the feeders are not overloaded. 
The cost-based approach requires that the step size be smaller than a certain system-dependent threshold, to ensure convergence. The primal-dual approach does not require a specific upper bound on the step size, although the degree of approximation depends linearly on the step-size in addition to other system parameters. In simulations, we observed that while both overload control methods are quite effective in controlling feeder overload. The primal-dual approach seems slightly better in terms of maximum feeder overload control, while the cost-based approach attains faster and smoother (monotonic) convergence.

While we discuss a best-response implementation of our algorithms in Section \ref{BestResponse}, a formal game-theoretic treatment of this problem is beyond the scope of the current work. In this respect, the approach that we propose differs from that in \cite{MCH10}, \cite{GKBG13}, where the PEV charging problem (without feeder overload constraints) has been analyzed as a strategic game of complete information between PEVs, and the Nash equilibrium properties of the game has been studied. Although it seems likely that the two approaches that we propose in this paper could be utilized (for incorporating feeder overload constraints) in such game-theoretic frameworks as well, that question remains open for further investigation.

\bibliographystyle{ieeetran}
\bibliography{ref}

\appendix
\renewcommand{\thesection}{\arabic{section}}
\subsection{Proofs}
\subsubsection{Proof of Theorem \ref{Lyapunov}}\label{A1}
\indent For the charging profile $\mathbf{p}=(p_1(1), p_1(2), ..., p_K(T))$, $\nabla [\mathcal{L}(\mathbf{p})]_{k,t}=\frac{\partial \mathcal{L}(\mathbf{p})}{\partial p_k(t)}$, the $\{(k-1)T+t\}^{th}$ component of $\nabla \mathcal{L}(\mathbf{p})$, $1 \leq k \leq K$, $1 \leq t \leq T$, is calculated as in the following:\\
\begin{equation}
[\nabla \mathcal{L}(\mathbf{p}]_{k,t}= \mathcal{V}\rq{} \left (D(t)+P(t)\right)+\sum_{l:k \in l}C_{l,t}\rq{}\left(P_l(t)-\eta_{l,t} P_l^{max}(t)\right)
\label{nablaL}
\end{equation}
\indent Let $\mathbf{p}, \mathbf{r} \in \mathcal{D}$ are two different feasible charging profiles and let define $R(t)=\sum_{k=1}^{K}r_k(t)$ and $R_l(t)=\sum_{k \in \Gamma_l}r_k(t)$. Then have:
\begin{gather}
||\nabla \mathcal{L}(\mathbf{p}) - \nabla \mathcal{L}(\mathbf{r})||_2^2 = \nonumber \\
\sum_{t=1}^{T}\sum_{k=1}^{K} \Bigg{\{} \left( \mathcal{V}\rq{} \left( D(t)+P(t)\right) - \mathcal{V}\rq{} \left( D(t)+R(t)\right)\right)^2 \nonumber \\
+ 2 \left(\mathcal{V}\rq{}(D(t)+P(t))-\mathcal{V}\rq{}(D(t)-R(t))\right) \nonumber \\
\times \sum_{l \in \Pi_k}\left(C_l\rq{}(P_l(t)-\eta_{l,t}P_l^{max}(t))-C_l\rq{}(R_l(t)-\eta_{l,t}P_l^{max}(t))\right)\nonumber \\
+\left(\sum_{l \in \Pi_k}C_l\rq{}(P_l(t)-\eta_{l,t}P_l^{max}(t))-C_l\rq{}(R_l(t)-\eta_{l,t}P_l^{max}(t))\right)^2\Bigg{\}}.
\label{deltaL}
\end{gather}
\indent As the first and second derivative of $\mathcal{V}(x)$ and $C_{l,t}(x)$, for all $l, t$, is continuous on $\mathbb{R}$, then based on the mean value theorem there exists some $\tilde{x}(t)$ between $D(t)+P(t)$ and $D(t)+R(t)$, and some $\tilde{x}_l(t)$ between $P_l(t)-\eta_{l,t}P_l^{max}(t)$ and $R_l(t)-\eta_{l,t}P_l^{max}(t)$ for all $l, t$ such that:\\
\begin{equation}
\mathcal{V}\rq{}\rq{}(\tilde{x}(t))=\frac{\mathcal{V}\rq{}(D(t)+P(t))-\mathcal{V}\rq{}(D(t)+R(t))}{P(t)-R(t)}.\\
\label{MeanValueV}
\end{equation}
and
\begin{equation}
C_{l,t}^{\rq{}\rq{}}(\tilde{x}_{l}(t))=\frac{C_{l,t}^{\rq{}}(P_l(t)-\eta_{l,t}P_l^{max}(t))-C_{l,t}^{\rq{}}(R_l(t)-\eta_{l,t}P_l^{max}(t))}{P_l(t)-R_l(t)}.\\
\label{MeanValue}
\end{equation}
Substituting (\ref{MeanValueV}) and (\ref{MeanValue}) in (\ref{deltaL}) and using the assumption that $\mathcal{V}\rq{}\rq{}(x) \leq B_1$, for all $x \in [D(t), D(t)+\max_{t}\left\{\sum_{k=1}^{K}p_{k}^{max}(t)\right\}]$, and $C_{l,t}^{\rq{}\rq{}}(x) \leq B$, for all $x \in [-\max_{l,t}\eta_{l,t}P_{l}^{max}(t),$\\$ \max_{l,t}\{\sum_{k \in \Gamma_l}p_k^{max}(t)-\eta_{l,t}P_{l}^{max}(t)\}]$, we have:\\
\begin{gather}
||\nabla \mathcal{L}(\mathbf{p}) - \nabla \mathcal{L}(\mathbf{r})||_2^2 \leq \sum_{t=1}^{T}\Bigg{\{}B_{1}^{2}K(P(t)-R(t))^2 \nonumber \\
+2B_{1}B(P(t)-R(t)) \sum_{k=1}^{K}\sum_{l \in \Pi_k}(P_l(t)-R_l(t)) \nonumber \\
+B^2\sum_{k=1}^{K}(\sum_{l \in \Pi_k}P_l(t)-R_l(t))^2\Bigg{\}}.
\label{deltaL1}
\end{gather}
Rearranging the terms and using the inequality $l_k \leq d_{max}$ have:\\
\begin{gather}
\sum_{k=1}^{K}\sum_{l \in \Pi_k}\left(P_l(t)-R_l(t)\right)=\sum_{k=1}^{K}(\sum_{l \in \Pi_k}K_l)(p_k(t)-r_k(t)) \nonumber \\
\leq d_{max}(P(t)-R(t))
\label{Rearrange1}
\end{gather}
Using Jensen\rq{}s inequality twice and from the fact that $|l_k| \leq d_{max}$ and $K_l \leq K$ with rearranging the terms have:\\
\begin{gather}
\sum_{k=1}^{K}(\sum_{l \in \Pi_k}P_l(t)-R_l(t))^2 \leq \sum_{k=1}^{K}\sum_{l \in \Pi_k} l_{k}(P_l(t)-R_l(t))^2 \nonumber \\
\leq d_{max} \sum_{k=1}^{K}\sum_{l \in \Pi_k}(P_l(t)-R_l(t))^2 \nonumber \\
= d_{max} \sum_{k=1}^{K}\sum_{l \in \Pi_k}(\sum_{j \in \Gamma_l}p_j(t)-r_j(t))^2 \nonumber \\
\leq d_{max} \sum_{k=1}^{K}\sum_{l \in \Pi_k} K_l \sum_{j \in \Gamma_l}(p_j(t)-r_j(t))^2 \nonumber \\
= d_{max} \sum_{k=1}^{K}(\sum_{l \in \Pi_k}K_l^2)(p_k(t)-r_k(t))^2 \leq d_{max}^2K^2\sum_{k=1}^{K}(p_k(t)-r_k(t))^2.
\label{Rearrange2}
\end{gather}
Using the Jensen\rq{}s inequality:
\begin{equation}
\sum_{t=1}^{T}(P(t)-R(t))^2 \leq K||\mathbf{p}-\mathbf{r}||_2^2,
\label{Jensen1}
\end{equation}
and substituting (\ref{Rearrange1}) and (\ref{Rearrange2}) in (\ref{deltaL1}) results in:\\
\begin{equation}
||\nabla \mathcal{L}(\mathbf{p})- \nabla \mathcal{L}(\mathbf{r})||_2^2 \leq (K(B_1+d_{max}B))^2||\mathbf{p}-\mathbf{r}||_2^2.
\label{squaredel}
\end{equation}
\indent Based on the proof of proposition 2.3.2 and equation (2.30) in \cite{B99}, $\mathcal{L}$ is a decreasing function of $\mathbf{p}^m_k$:\\
\begin{equation}
\mathcal{L}(\mathbf{p}^{m+1})-\mathcal{L}(\mathbf{p}^{m}) \leq (\frac{K}{2}(B_1+d_{max}B)-\frac{1}{\alpha})||\mathbf{p}^{m+1}-\mathbf{p}^{m}||_2^2
\label{Ldec}
\end{equation}
 for $0 < \alpha < (K(B_{1}+d_{max}B))^{-1}$.\\
 \indent As the right side of equation (\ref{Ldec}) is non-positive, $\mathcal{L}(\mathbf{p}^{m})=\mathcal{L}(\mathbf{p}^{m+1})$ if and only if $\mathbf{p}^{m}=\mathbf{p}^{m+1}$ for some $m \geq 1$. In this case, $\mathbf{\tilde{p}}^{*}=\mathbf{p}^{m}$ is a limit point of $\{\mathbf{p}^{m}\}_{m=1}^{\infty}$ as $\mathbf{p}^{n}=\mathbf{\tilde{p}}^{*}$ for all $n \geq m$ and the sequence $\{\mathcal{L}(\mathbf{p^{m}})\}_{m=1}^{\infty}$ converges decreasingly to $\mathcal{L}^*=\mathcal{L}(\mathbf{\tilde{p}}^*)$.\\
 \indent For the only other possible case, $\{\mathcal{L}(\mathbf{p}^{m})\}_{m=1}^{\infty}$ is a strictly decreasing sequence of $\mathbf{p}^m$, i.e. $\mathcal{L}(\mathbf{p}^{m}) < \mathcal{L}(\mathbf{p}^{n})$ for all $n > m$ and thus $\mathbf{p}^m \neq \mathbf{p}^n$ for all $n \neq m$. As $C_{l,t}(x)$, for all $1 \leq l \leq L$, is a non-negative and continious function, therefore $\mathcal{L}(\mathbf{p})$ is also a non-negative and continious function and bounded below by $\sum_{t=1}^{T}\mathcal{V}(D(t))$. Thus based on the monotone convergence theorem and continuity of $\mathcal{L}(\mathbf{p})$, $\{ \mathcal{L}(\mathbf{p}^{m}) \}_{m=1}^{\infty}$ converges to a limit point $\mathcal{L}^{*}=\lim_{m \rightarrow \infty} \mathcal{L}(\mathbf{p}^{m})$. As $\mathcal{D}$ is compact, based on the Bolzano-Weierstrass theorem, the infinite subset $\{\mathbf{p}^{m}\}_{m=1}^{\infty} \subseteq \mathcal{D}$ has a limit point $\mathbf{\tilde{p}}^{*} \in \mathcal{D}$, i.e. there exists a subsequence $\{\mathbf{\tilde{p}}^{m}\}_{m=1}^{\infty} \subseteq \{ \mathbf{p}^{m}\}_{m=1}^{\infty}$ such that $\lim_{m \rightarrow \infty} \mathbf{\tilde{p}} = \mathbf{\tilde{p}}^{*}$. As $\mathbf{p}^m \neq \mathbf{p}^n$ for all $m \neq n$ therefore $\{\mathbf{\tilde{p}}^{m}\}_{m=1}^{\infty} \subseteq \{ \mathbf{p}^{m}\}_{m=1}^{\infty}$ is also an infinite set. From the continiuity of $\mathcal{L}$ over $\mathcal{D}$ have: $\mathcal{L}^{*}=\lim_{m \rightarrow \infty} \mathcal{L}(\mathbf{p}^{m})=\lim_{m \rightarrow \infty} \mathcal{L} (\mathbf{\tilde{p}}^{m}) = \mathcal{L}(\mathbf{\tilde{p}}^*)$.  \\
 \indent Based on proposition 2.3.2 in \cite{B99}, in both cases, the limit point $\mathbf{\tilde{p}}^{*}$ is stationary, i.e. it satisfies the optimality condition (2.1) in \cite{B99}:\\
\begin{equation}
\nabla \mathcal{L}(\mathbf{p}^*)^{T}(\mathbf{p}-\mathbf{p}^*) \geq 0, \hspace{5mm} \forall \mathbf{p}\in \mathcal{D}.
\label{stationary}
\end{equation}
\indent As $C_{l,t}(x)$ is a convex function, $\mathcal{L}(\mathbf{p})$ is also convex over $\mathcal{D}$, and based on proposition 2.1.2 in \cite{B99}, the limit point $\mathbf{\tilde{p}}^{*}$ minimizes $\mathcal{L}$ over $\mathcal{D}$: $\mathcal{L}^*=\mathcal{L}(\mathbf{\tilde{p}^*}) \leq \mathcal{L}(\mathbf{p})$ for $\forall \mathbf{p} \in \mathcal{D}$. This concludes the proof of Theorem 1.\\
 \indent \emph{Partially asynchronous updates:} Note that in equation (\ref{squaredel}), the vectors $\mathbf{p}$ and $\mathbf{r}$ are arbitrary vectors in $\mathcal{D}$. Furthermore, based on the cartesian product property of $\mathcal{D}=\mathcal{D}_1\times ...\times \mathcal{D}_{K}$, and using the characteristic property of the projection (Proposition 2.1.3 in \cite{B99}) on each feasible set, $\mathcal{D}_k$, all the equations in the proof of Proposition 2.3.2 in \cite{B99} also hold if not all of PEVs update their charging profiles in an iteration. Therefore, equation (\ref{Ldec}) holds for partially synchronous-based iterative update method that is described in chapter 7 in \cite{BT89}. In order to achieve convergence, we assume that each PEV updates its charging profile at least once during each finite length cycle of iterations, where a cycle refers to a set of successive iterations that each PEV updates its charging profile at least once. Then a convergence result similar to Theorem 1 can be shown to hold in the partially asynchronous update model as well. \\
\subsubsection{Proof of Theorem \ref{primaldual}}\label{A2}
\indent From equation (\ref{cost-gradient}):
\begin{equation}
\mathcal{L}_{\mathbf{p}_k}(\mathbf{p}^m) = \sum_{t=1}^{T} \Bigg{\{}\mathcal{V}'\left(D(t)+\sum_{k=1}^{K}p^m_k(t)\right) + \sum_{l \in \Pi_k} C'_{l,t}\left(\sum_{k \in \Gamma_l}p^m_k(t)-\eta_{l,t} P_l^{max}(t)\right)\Bigg{\}}. \label{cost-gradient}
\end{equation}
 and from the fact that $||\mathbf{x}|| \leq ||\mathbf{x}||_1$ for any vector $\mathbf{x}$, we have: $||\mathcal{L}_{\mathbf{p}}(\mathbf{p}^{m}, \mathbf{\mu}^{m})|| \leq L_1$. From equation (\ref{dual-gradient-mu}):
\begin{equation}
\mathcal{L}_{\mu_{l,t}}(\mathbf{p}^m, \mathbf{\mu}^m) \ = \ g_{l,t}(\mathbf{p}^m(t)) \ = \ P^m_l(t) - \eta_{l,t}P_l^{max}(t),
\label{dual-gradient-mu}
\end{equation}
we have: $||\mathcal{L}_{\mathbf{\mu}}(\mathbf{p}^{m}, \mathbf{\mu}^{m})) \leq L_2$ for all $m$. Therefore, the subgradients $\mathcal{L}_{\mathbf{p}}(\mathbf{p}^{m}, \mathbf{\mu}^{m})$ and $\mathcal{L}_{\mathbf{\mu}}(\mathbf{p}^{m}, \mathbf{\mu}^{m})$ are uniformly bounded by $N=\max\{L_1, L_2\}$.\\
\indent Let $q: \mathbb{R}_{+}^{L} \rightarrow \mathbb{R}$ is the dual objective function: $\mathbf{q}({\mu})=\inf_{\mathbf{p} \in \mathcal{D}}\mathcal{L}(\mathbf{p}, \mathbf{\mu})$. We define $q_0 = q(0) = \inf_{\mathbf{p} \in \mathcal{D}}f(\mathbf{p})$. The vector $\overline{\mathbf{p}} \in \mathcal{D}$ defined in section IV with $g_{l,t}(\overline{\mathbf{p}}) < \epsilon$, for some $\epsilon > 0$, satisfies the slater condition described in assumption (2) in \cite{NO09}.\\
\indent Using the inequalities: $f(\overline{\mathbf{p}}) \leq \sum_{t=1}^{T}\mathcal{V}(D(t)+P^{max}(t))$, $q_0 \geq \sum_{t=1}^{T}\mathcal{V}(D(t))$ and $\gamma = \min_{\forall l, t}\{-g_{l,t}\} > \epsilon$, have:\\
\begin{gather}
\frac{f(\overline{\mathbf{p}})-q_0}{\gamma}<\frac{\sum_{t=1}^{T}\mathcal{V}(D(t)+P^{max}(t))-\sum_{t=1}^{T}\mathcal{V}(D(t))}{\epsilon}=LT\mu^{max}-1.
\label{mumax-ineq}
\end{gather}
Let define the scalar $r>1$ as $r=LT\mu^{max}-\frac{f(\overline{\mathbf{p}})-q_0}{\gamma}$ and the set $\mathcal{M}=\{ \mathbf{\mu} \mid \mu_{l, t} \in \mathcal{M}_{l,t}, \forall l, t\}$, have: $||\mathbf{\mu}|| \leq ||\mu||_1=\sum_{l=1}^L \sum_{t=1}^{T} \mu_{l,t} \leq LT\mu^{max}=\frac{f(\overline{\mathbf{p}})-q_0}{\gamma}+r$, for all $\mathbf{\mu} \in \mathcal{M}$.\\
\indent Let $\mathbf{p}^{*} \in \mathcal{D}$ be an optimal solution. As the slater condition and bounded subgradient assumptions hold, based on the part (a) of the proposition 2 in \cite{NO09} and $r>1$, $||g(\hat{\mathbf{p}}^{m})^{+}||$ is upper bounded by $\frac{2}{m \alpha}\left(\mu^{max}\right)^2+\frac{||\mathbf{p}^{0}-\mathbf{p}^{*}||}{2m\alpha}+\frac{\alpha N^2}{2}$ and as $m \rightarrow \infty$, the upper bound goes to $\frac{\alpha N^2}{2}$.\\
\indent Furthermore based on the part (b) and (c) of the proposition 2 in \cite{NO09}, $f(\hat{\mathbf{p}}^{m})$ is upper bounded by $f^{*}+\frac{||\mu^{0}||^2}{2m\alpha}+\frac{||\mathbf{p}^{0}-\mathbf{p}^{*}||^2}{2m\alpha}+\alpha N^2$ and is lower bounded by $f^{*}-LT\mu^{max}||g(\tilde{\mathbf{p}}^{m})^{+}||$. As $m \rightarrow \infty$, the upper bound and lower bound go to $f^{*}+\alpha N^2$ and $f^{*} - LT\mu^{max} \alpha N^2$, respectively.
\subsubsection{Proof of Theorem \ref{valleyfill1}} \label{A3}
\indent Let denote $\mathcal{X}_k=\{\mathbf{p}_k| 0 \leq p_k(t) \leq p_k^{max}(t), \forall t \in T_k\}$. Let define $h(\mathbf{p}_k)=\sum_{t\in T_k}(b_k(t)+p_k(t))^2$ and let us define the Lagrangian function:
\begin{equation}
\mathcal{L}(\mathbf{p}_k, \lambda)=\sum_{t\in T_k}(p_k(t)+b_k(t))^2-2\lambda(\sum_{t\in T_k}p_k(t)-U_k).
\label{LagrangeThm3}
\end{equation}
\indent The objective function in equation (\ref{renew2}), $h(\mathbf{p}_k)$, is convex and continuously differentiable in $\mathcal{X}_k$ and $\mathcal{X}_k$ is a bounded polyhedral set that contains all the inequality constraints for $\mathbf{p}_k$. Therefore, based on Proposition 3.4.1 in \cite{B99}, the feasible charging profile, $\mathbf{p}^{*}_k$, is an optimal solution for equation (\ref{renew2}), if and only if there exists scalar $\lambda^* \in \mathbb{R}$ such that $\mathbf{p}_k^{*}$ minimizes $\mathcal{L}(\mathbf{p}_k, \lambda^*)$ over $\mathcal{X}_k$:
\begin{eqnarray}
\mathbf{p}_k^{*}=\arg\min_{\mathbf{p}_k \in \mathcal{X}_k}\mathcal{L}(\mathbf{p}_k, \lambda^{*})=\arg\min_{\mathbf{p}_k\in\mathcal{X}_k}\sum_{t=t_k^s}^{t_k^f} (p_k(t)+b_k(t) - \lambda^*)^2 \nonumber\\
=\sum_{t=t_k^s}^{t_k^f}\arg\min_{0 \leq p_k(t) \leq p_k^{max}(t)}(p_k(t)+b_k(t) - \lambda^*)^2.
\label{LagrangeDual}
\end{eqnarray}
\indent The optimal solution for $p_k^{*}(t)$, $t \in T_k$, that minimizes each term of $(p_k(t)+b_k(t) - \lambda^*)^2$ in equation (\ref{LagrangeDual}), is derived based on the only three possible cases as in the following:
\begin{eqnarray}
p^{*}_k(t)=
\begin{cases}
0,  \hspace{5mm} \lambda^* \leq b_k(t),\\
\lambda^* - b_k(t), \hspace{5mm} b_k(t) < \lambda^* < b_k(t) + p_k^{max}(t), \hspace{10mm} t \in T_k\\
p_k^{max}(t),   \hspace{5mm} b_k(t) + p_k^{max}(t) \leq \lambda^*.
\end{cases}
\label{lambdaThm3}
\end{eqnarray}
\indent Based on equation (\ref{lambdaThm3}),we have: $p_k^{*}(t)=[\min\left\{\lambda^*-b_k(t), p_k^{max}(t)\right\}]^{+}$, $\forall t \in T_k$. As the function $y(\lambda^*) = \sum_{t=t_k^s}^{t_k^f} [\min\left\{\lambda^*-b_k(t), p_k^{max}(t)\right\}]^{+}-U_k$ is a strictly increasing function of $\lambda^* \in \Lambda$ and it ranges in $-U_k \leq y(\lambda^*) \leq \sum_{t \in T_k}p_k^{max}(t)-U_k$ for $\lambda^* \in \Lambda$, therefore there exists a unique $\lambda^* \in \Lambda$ such that $y(\lambda^*)=0$ to satisfy the constraint in equation (\ref{Uk}). 
Thus the optimal solution of equation (\ref{renew2}) is uniquely derived as $p_k^{*}(t)=[\min_{t \in T_k}\{\lambda^{*}-b_k(t)\}, p_k^{max}(t)]^{+}$ for a unique $\lambda^{*} \in \Lambda$.\\
\subsection{Additional Simulation Results}
\subsubsection{Effects of scaling factors in the cost-based overload control approach}
In this section we investigate the effects of the scaling factors, $\beta_{l,t}$, and overload control factors $\eta_{l,t}$, for all $l,t$, in the performance of the cost-based overload control method through the simulations. In our simulations, we consider $\beta_{l,t}$ and $\eta_{l,t}$ as constant values for all $l,t$, denoted as $\beta$ and $c$, respectively. Let $\beta^*$ denote the minimum $\beta$ such that the maximum normalized link overload derived as in equation (\ref{NormOver}) is not larger than zero for all $t$.\\
\indent Figures \ref{Betac90} and \ref{Betac95} demonstrate the total load and maximum normalized link overload of the grid using cost-based overload control approach varying $\beta$ and $c=0.9$ and $c=0.95$, respectively. As it is seen in these figures, increasing $\beta$ decreases maximum normalized link overload. The reduction is noticeable when varying $\beta$ from $\beta^*$ to $10\beta^*$, but as $\beta$ gets larger, increasing $\beta$ does not have any significant effect on minimizing overload and total load variance.\\
\indent Note that the second derivative of the cost function $C_{l,t}^{''}(x)=(2+\hat{\epsilon})(1+\hat{\epsilon})\beta$ is bounded by $B=2.04\beta$, for $\hat{\epsilon}=0.01$. Based on Theorem \ref{Lyapunov}, the upper bound of the step size $\alpha$ required for convergence has an inverse relationship with $B$, and therefore with $\beta$.  Therefore as $\beta$ increases, the step size must be made smaller, which reduces the speed of convergence.
\begin{figure}[H]
\centering
\subfigure[]
{
\includegraphics[width=0.75\textwidth]{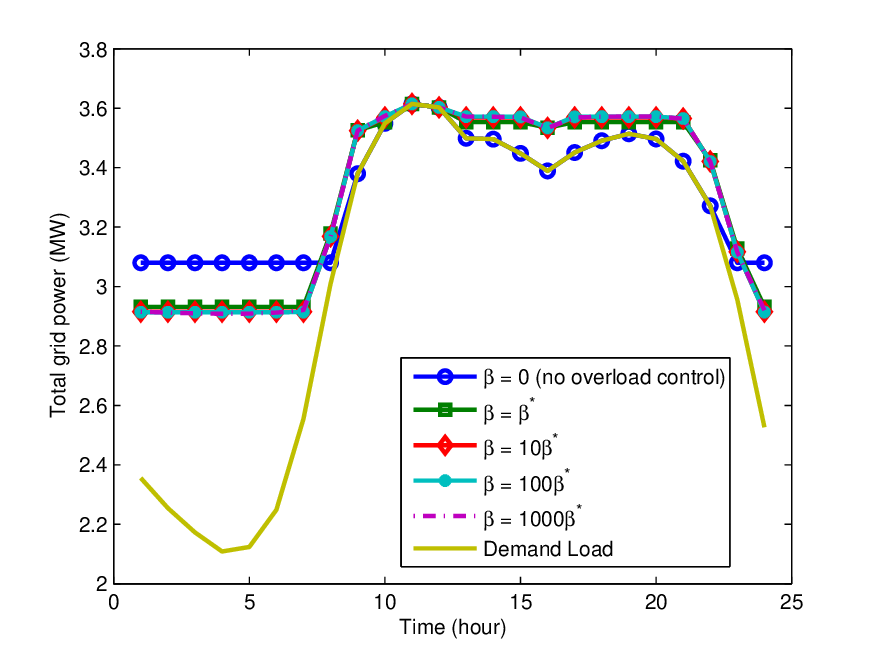}
\label{TotalloadBetac90}
}
\subfigure[]
{
\includegraphics[width=0.75\textwidth]{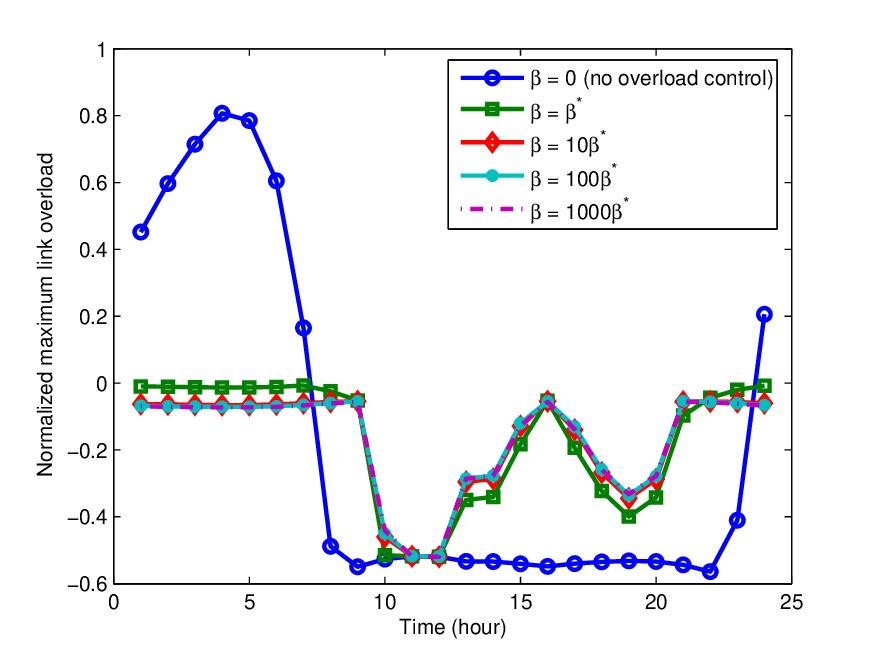}
\label{OverloadBetac90}
}
\caption{(a) Total load, and (b) Normalized maximum overload of the feeders in IEEE Bus 13 distribution system using cost-based overload control with $c=0.9$ and varying $\beta$.}
\label{Betac90}
\end{figure}
\begin{figure}[H]
\centering
\subfigure[]
{
\includegraphics[width=0.75\textwidth]{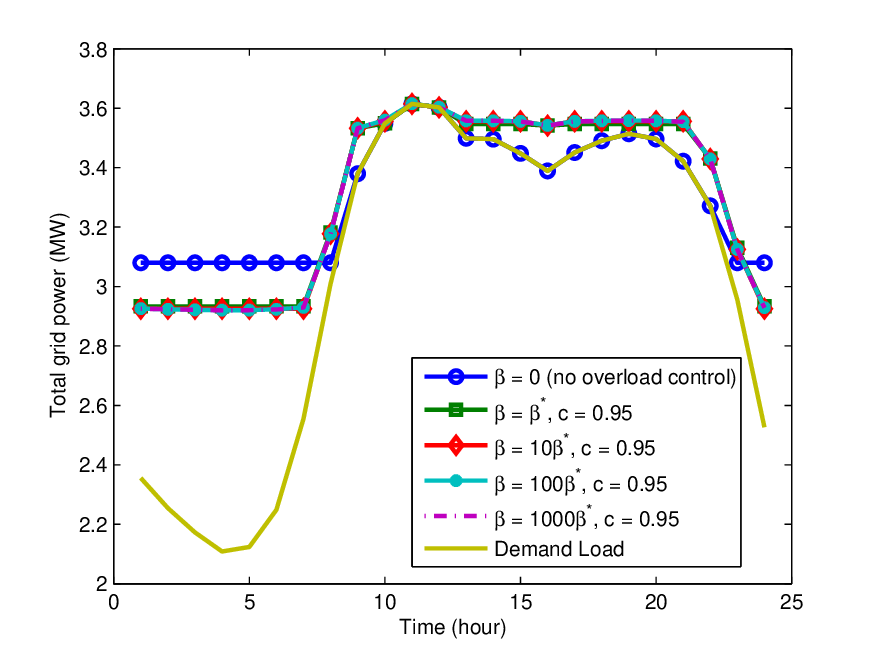}
\label{TotalloadBetac95}
}
\subfigure[]
{
\includegraphics[width=0.75\textwidth]{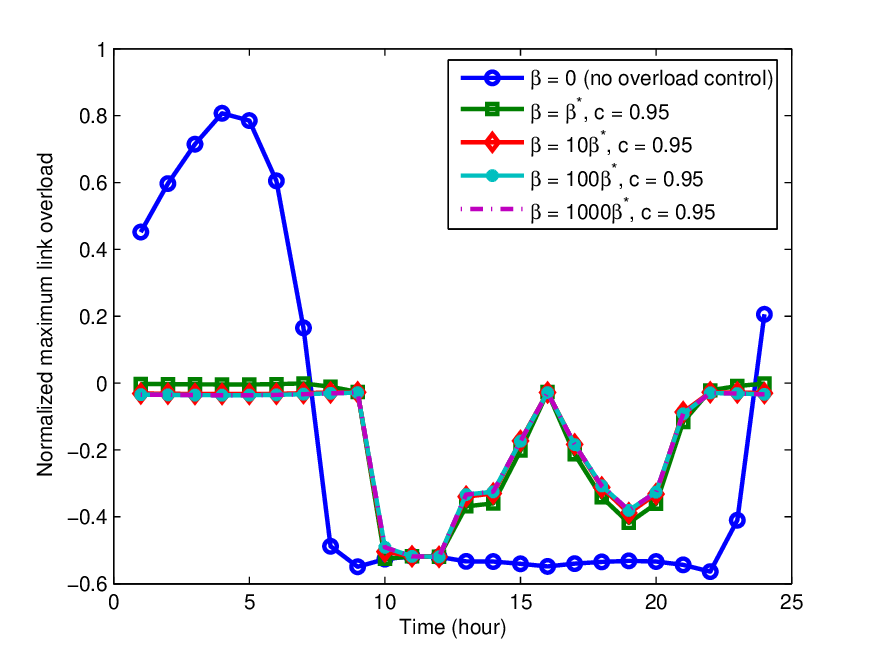}
\label{OverloadBetac95}
}
\caption{(a) Total load, and (b) Normalized maximum overload of the feeders in IEEE Bus 13 distribution system using cost-based overload control with $c=0.95$ and varying $\beta$.}
\label{Betac95}
\end{figure}
\subsubsection{Asynchronous updates}
\indent Based on the discussion towards the end of Appendix A-1 the charging profiles of PEVs can be updated according to the partially asynchronous model, in the cost-based overload control method. In this case, it is not necessarily that all the PEVs update their charging profile in each iteration, but it is necessary that each PEV updates its charging profile in finite time. The asynchronous update model (policy) we implement next makes even weaker assumptions on the synchrony of the updates: a random PEV updates its charging profile at each iteration. In primal-dual asynchronous-based update, the dual variables are also updated at the end of each iteration. 
Figure \ref{Asynch} compares the normalized error vs. average number of updates per PEV, for the cost-based and primal-dual overload control methods with asynchronous (random) updates.  
Comparing with Figure \ref{NormError}, we observe that the convergence rate of the asynchronous updates (random update order), the convergence rate is comparable to the synchronous charging profile update policy.\\
\indent Although the simulation results show that the asynchronous-based charging profile update in the primal-dual approach converges to the optimal charging profile, the theoretical proof of the convergence is remained under further investigation in our future work.
\begin{figure}[H]
\centering
\includegraphics[width=0.75\textwidth]{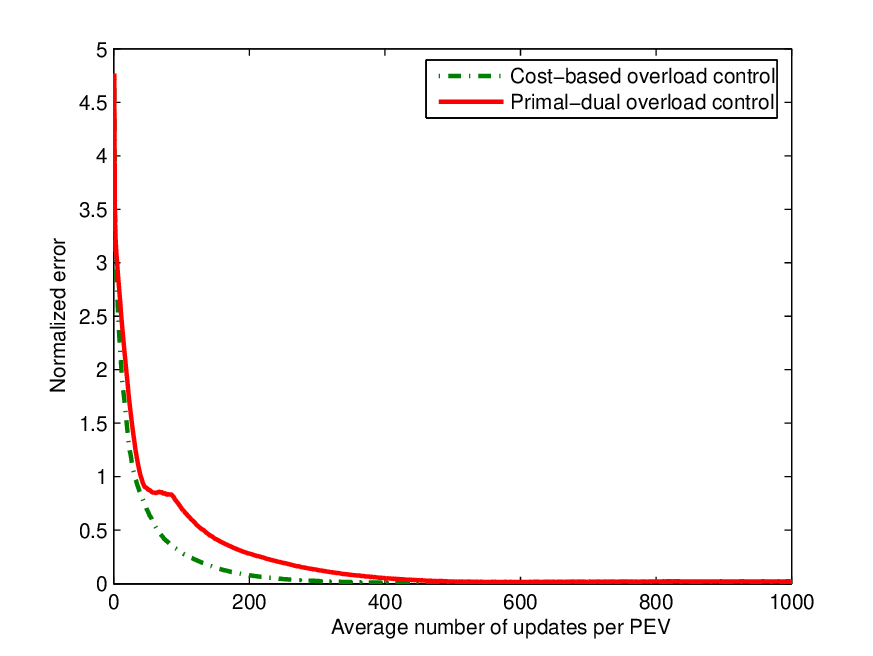}
\caption{Normalized error vs. Average number of updates per PEV for the asynchronous charging profile updates in cost-based and primal-dual overload control methods.}
\label{Asynch}
\end{figure}
\end{document}